\documentclass[journal]{IEEEtran}
%

\newtheorem{definition}{Definition}
\newtheorem{theorem}{Theorem}
\newenvironment{proof}{\begin{IEEEproof}}{\end{IEEEproof}}
\usepackage{amssymb}
\newtheorem{example}{Example}  
\newtheorem{strategy}{Strategy}    
\newtheorem{lemma}{Lemma}  
\usepackage{multirow}  
\usepackage{hyperref}

\usepackage{amsmath}
\usepackage{algorithm}
\usepackage{algorithmic}

%
\ifCLASSINFOpdf
\usepackage[pdftex]{graphicx}
\DeclareGraphicsExtensions{.pdf,.jpeg,.png}
\else
\usepackage[dvips]{graphicx}
\DeclareGraphicsExtensions{.eps}
\fi

%
\ifCLASSINFOpdf
\else
\fi


\hyphenation{op-tical net-works semi-conduc-tor}

\begin{document}

\newtheorem{property}{Property}
%

\title{Utility-driven Data Analytics on Uncertain Data} 
%
%
        
\author{Wensheng Gan,
        Jerry Chun-Wei~Lin,~\IEEEmembership{Member,~IEEE,}        
        Han-Chieh Chao,~\IEEEmembership{Senior Member,~IEEE}\\
        Athanasios V. Vasilakos,~\IEEEmembership{Senior Member,~IEEE}
        and Philip S. Yu,~\IEEEmembership{Fellow,~IEEE}

\thanks{Wensheng Gan is with the Department of Computer Science, University of Illinois at Chicago, IL, USA. (E-mail: wsgan001@gmail.com)}

\thanks{Jerry Chun-Wei Lin is with the Department of Computing, Mathematics, and Physics, Western Norway University of Applied Sciences, Bergen, Norway. (E-mail: jerrylin@ieee.org)}

\thanks{Han-Chieh Chao is with the Department of Electrical Engineering, National Dong Hwa University, Hualien, Taiwan. (E-mail: hcc@ndhu.edu.tw)}

\thanks{Athanasios V. Vasilakos is with the Department of Computer Science, Electrical and Space Engineering, Lulea University of Technology, Sweden. (E-mail: th.vasilakos@gmail.com)}

\thanks{Philip S. Yu is with the Department of Computer Science, University of Illinois at Chicago, IL, USA. (E-mail: psyu@uic.edu)}

\thanks{Manuscript received XXXX; revised XXXX.}
}

\markboth{IEEE Internet of Things Journal, 2019}%
{Shell \MakeLowercase{\textit{et al.}}: Bare Demo of IEEEtran.cls for IEEE Journals}

\maketitle

\begin{abstract}

Modern Internet of Things (IoT) applications generate massive amounts of data, much of it in the form of objects/items of readings, events, and log entries. Specifically, most of the objects in these IoT data contain rich embedded information (e.g., frequency and uncertainty) and different levels of importance (e.g., unit utility of items, interestingness, cost, risk, or weight). Many  existing approaches in data mining and analytics have limitations such as only the binary attribute is considered within a transaction, as well as all the objects/items having equal weights or importance. To solve these drawbacks, a novel utility-driven data analytics algorithm named HUPNU is presented, to extract \textbf{\underline{H}}igh-\textbf{\underline{U}}tility patterns by considering both \textbf{\underline{P}}ositive and \textbf{\underline{N}}egative unit utilities from \textbf{\underline{U}}ncertain data. The qualified high-utility patterns can be effectively discovered for risk prediction, manufacturing management, decision-making, among others. By using the developed vertical Probability-Utility list with the Positive-and-Negative utilities (PU$^{\pm}$-list) structure, as well as several effective pruning strategies. Experiments showed that the developed HUPNU approach performed great in mining the qualified patterns efficiently and effectively.

\end{abstract}

\begin{IEEEkeywords}
Internet of Things, manufacturing data, uncertainty, utility, data analytics.
\end{IEEEkeywords}

%
\IEEEpeerreviewmaketitle

\section{Introduction}
%
%
%

\IEEEPARstart{W}{ith} the increasing prevalence of sensors, mobile phones, actuators, and RFID tags, Internet of Things (IoT) applications generate massive amounts of rich data per day \cite{atzori2010internet,tsai2014data,chen2015data}. Examples include the control signals issued to Internet-connected devices like lights or thermostats, measurements from industrial and medical equipment, manufacturing data, or log files from smart-phones that record the complex behavior of sensor-based applications. These applications need to find frequent patterns that represent typical system behavior (e.g., \cite{tsai2014data,agrawal1994fast,han2004mining}) as well as high risk/utility patterns that represent very important knowledge from normal behavior \cite{ahmed2009efficient,liu2012mining,tseng2013efficient}. For industrial areas \cite{chen2015data,sheng2013survey}, various algorithms have been designed by applying data mining methods \cite{chen2015data,fournier2017survey0}. These methods can be used to evaluate the complicated and complex data from the industry and solve existing problems.

In some industrial areas, manufacturing schedules can be planned by the discovered information and knowledge, which can then be used to increase and gain utilities, maximally \cite{chen2015data,yun2017efficient}. Pattern mining and analysis, from an industrial research perspective, provides a unique and up-to-date insight into the manufacturing industry \cite{chen2015data}. There are many industrial utility-driven manufacturing systems with different IoT applications. Consider an environment with all types of sensors to monitor abnormal conditions. Let each transaction denote the set of sensors showing above normal value at a particular instance, and the value associated with each sensor can be the abnormality or risk measure. Here the number of units can mean the number of abnormal sensors of the same type, then how to discover the high risk patterns is quite useful for manufacturing analytics in  Internet of Things. In the retail industry, it can identify the purchase behaviors of customers, which can be used to make specific decisions and improve the service quality of customers. However, existing Frequent-based Pattern Mining (FPM) algorithms \cite{fournier2017survey0} ignore that information, and the discovered information or pattern may contain useless knowledge, such as items with a lower utility that may be discovered.  Thus, traditional FPM cannot handle the problem of the quantitative databases, and also fails to extract the utility-driven patterns which are insightful for data analytics. A new utility-driven data analytics framework named High-Utility Pattern Mining (HUPM) \cite{ahmed2009efficient,liu2012mining,tseng2013efficient} was presented to provide a solution to the above limitations. It is important to notice that the \textit{utility} concept can be referenced to reflect multiple participants' satisfaction \cite{tao2012utility}, utility \cite{yun2017efficient,2gan2018survey}, revenue \cite{teng2018revenue}, risk \cite{2gan2018survey}, or weight \cite{cai1998mining}.

\subsection{Motivation}
To reveal high utility/risk patterns for decision-making, it considers the quantitative database, as well as the unit utility of the items. The analytics of using HUPM instead of FPM is quite helpful for system monitoring, planning, manufacturing management, and decision-making \cite{tsai2014data,chen2015data,yun2017efficient}. Most algorithms in traditional HUPM consider all items to have positive unit utilities/risks. When the items have the constraint of negative values, for example sales discount (e.g., buy two get one free) or if a supermarket/retail store sells the products at a loss to stimulate the sales of other relative items (e.g., sell printers to promote the notebook/PC), traditional HUPM cannot mine meaningful information. This situation happens in real-life scenarios, especially when promoting some profitable items for gaining more money in cross-selling conditions. Some prior works have mentioned that traditional HUPM algorithms \cite{ahmed2009efficient,tseng2013efficient} may generate incomplete and missing information \cite{chu2009efficient} when the dataset consists of negative unit utility objects/items.

For analyzing the complex industrial data and manufacturing data \cite{tsai2014data,yun2017efficient,wan2017manufacturing}, it may encounter various challenges, i.e., the embedded uncertainty, positive risk and negative risk, and other factors. The uncertainty factor exists in many realistic situations, such as the collection of noisy data sources (e.g., GPS, wireless sensor network, RFID, or WiFi systems) \cite{aggarwal2009survey,bernecker2009probabilistic}. Traditional pattern mining algorithms cannot be utilized straightforwardly to inaccurate or uncertain environments for mining the required knowledge or information. The reason for this is that the \textit{utility} can be considered as a semantic measure to value how the ``utility'' of a pattern based on users' priori experience, goal, and knowledge, and the \textit{uncertainty} can be considered as an objective measure to value the probability of a pattern as an objective existence; they are two totally different factors. Most existing utility-based algorithms have been studied to handle precise data, while they are unable to deal with data uncertainty \cite{lin2016efficient}. If the uncertain factor was not considered in the mining process, it may find useless or misleading information with low existential probability.

\subsection{Contribution}

To solve the above limitations and problems, we attempt to design a novel algorithm named HUPNU to extract \underline{\textbf{H}}igh-\underline{\textbf{U}}tility patterns with both \underline{\textbf{P}}ositive and \underline{\textbf{N}}egative unit utilities from \underline{\textbf{U}}ncertain data in intelligent environment. Moreover, to deal with realistic situations in real-life applications, the positive and negative unit utilities are considered. The significant contributions of this work are listed below: 

\begin{itemize}
	 \item To the best of our knowledge on pattern mining, this is the first study to discuss the problem by considering both the positive and negative unit utilities, and the uncertainty factor to discover the qualified high-utility patterns. This approach is more suitable for realistic situations such as risk prediction, manufacturing management, e-commerce, and decision making.
	 \item A vertical and compact \underline{\textbf{P}}robability-\underline{\textbf{U}}tility \underline{\textbf{list}}, with both a \underline{\textbf{p}}ositive and \underline{\textbf{n}}egative utilities (PU$^{\pm}$-list) structure, is developed. This list can keep all the essential knowledge from the database for later mining progress. 
	 \item A one-phase method called HUPNU is designed to discover the qualified high-utility patterns using the PU$^{\pm}$-list structure. The multiple database scans and the generate-and-test mechanism can be largely avoided and ignored. 
	 \item We also developed several pruning strategies to easily remove the unpromising candidates and reduce the search space size of the qualified high-utility patterns. These pruning strategies can also efficiently reduce the size of the PU$^{\pm}$-list by the designed downward closure property. 
	 \item Several experiments were conducted on both synthetic and realistic datasets. Results showed that the developed HUPNU performed great in mining the qualified patterns efficiently and effectively. 
	 
\end{itemize}
 
This paper is organized as follows: A literature review is given in Section \ref{sec:2}. Preliminaries and the problem statement are shown in Section \ref{sec:background}. The designed HUPNU with the several pruning strategies are studied in Section \ref{sec:algorithm}. Extensive experiments are conducted in Section \ref{sec:experiments}. The conclusion is finally provided in Section \ref{sec:conclusion}.

\section{Related Work}
\label{sec:2}
\subsection{Support-based Pattern Mining}
Data mining and analytic technologies are used in many different domains \cite{tsai2014data,yun2017efficient,wan2017manufacturing,liu2017exploring} and they provide powerful ways of discovering useful, meaningful, and implicit information from very large datasets. Frequent Pattern Mining (FPM) is the most fundamental concept in retrieving the qualified information using a support-based constraint, and many works have been developed based on the support criteria to mine frequent itemsets or association rules \cite{agrawal1994fast,han2004mining,fournier2017survey0}.  Other factors, such as interestingness \cite{geng2006interestingness} or weights, have also been considered with mining criteria to find the interesting or important patterns in the task of pattern mining. Many algorithms have been designed to find the meaningful patterns from a binary database \cite{fournier2017survey0,geng2006interestingness,luna2016speeding}. However, they only assess whether an item appears in a transaction, and this approach does not consider the useful factors, for example, an event may be occurred in multiple quantities in a record.  Quantitative Association-Rule Mining (QARM) \cite{srikant1996mining,hong1999mining} was presented instead of the binary value (0 or 1) for discovering more meaningful and useful information.

Different from processing precise data, some pattern mining approaches have been developed to deal with uncertain data, for discovering frequent expected patterns \cite{chui2007mining} or probabilistic  frequent patterns \cite{bernecker2009probabilistic} by taking the uncertainty from data into account. The reason is that the uncertainty factor exists in many realistic data sources (e.g., GPS, wireless sensor network, RFID, or WiFi systems). Some details of uncertain data algorithms and applications can be referred to \cite{aggarwal2009survey}.

\subsection{Utility-based Pattern Mining with Efficiency Issues} 
Although QARM solves the past limitation of traditional association-rule mining, it still does not consider more important and interesting factors such as the unit utilities of the objects/items, which can bring the profitable objects to user in service-oriented
manufacturing system \cite{tao2012utility,2gan2018survey}. In addition, the support-based constraint of pattern mining is inappropriate for measuring the importance of the items in realistic situations. To tackle these problems, High-Utility Pattern Mining (HUPM) \cite{ahmed2009efficient,liu2012mining,tseng2013efficient} was presented to reveal high-utility patterns. HUPM considers both the unit utility and quantity of the objects to show the high-utility patterns from the quantitative database, which can provide more meaningful results than that of the support-based algorithms. Yao et al. \cite{yao2004foundational} first defined the utility-mining problem by considering the occurred quantity (treated as the \textit{internal utility}) and unit utility of the objects/items (treated as the \textit{external utility}) to reveal the itemsets with high utilities. 
In the past decade, HUPM has been considered as the emerging topic in many tasks of data analytics, and some well-known algorithms are developed, such as the  Transaction-Weighted Utility (TWU) model \cite{liu2005two}, IHUP \cite{ahmed2009efficient}, UP-growth and UP-growth+ \cite{tseng2013efficient}, HUI-Miner \cite{liu2012mining}, and so on.
Many variants of HUPM have also been discussed focusing on mining different forms of utility-oriented patterns. The importance of HUPM is increasing, especially in the current era of big data \cite{tsai2015big}, and more opportunities and challenges are required for discussion and analysis since HUPM can provide realistic benefits to the retailers and managers in many different applications and domains \cite{yun2017efficient}.

\subsection{Utility-based Pattern Mining with Effectiveness Issues} 

In addition to the efficiency issue of utility mining, a number of models have been proposed to address the effectiveness issue of discovering different kinds of high-utility patterns (HUPs). Current HUPM  algorithms can successfully handle the temporal data \cite{lin2015efficient,lin2017two}, and dynamic data \cite{lin2015fast,gan2018survey,3lin2016fast}. Other interesting effectiveness issues, such as HUPM with discount strategies \cite{lin2016fast}, the concise representation \cite{tseng2015efficient}, discriminative issue \cite{lin2017fdhup}, and top-$k$ problem \cite{tseng2016efficient} for HUPM, have been extensively studied. Lin et al. first proposed an utility mining model to extract the high-utility patterns from uncertain databases \cite{lin2016efficient}. Different from the above utility measures of HUPs, another utility measure for utility-driven pattern mining namely utility occupancy is introduced recently \cite{gan2018huopm}. And a comprehensive survey of utility mining has been provided by Gan et al. \cite{2gan2018survey}.

All the above HUPM algorithms only consider the positive utilities/risks and quantities of items. However, in some real-world scenarios, the utility/risk/weight values of the objects/items in databases usually can be either positive values or negative values. Therefore, the traditional algorithms of HUPM can not successfully be applied to handle the databases containing negative values. In the past, the two-phase HUINIV-Mine \cite{chu2009efficient}, TS-HOUN \cite{lan2014shelf}, and one-phase FHN \cite{lin2016fhn} algorithms are proposed to deal with precise data which containing both positive and negative utility values. However, all the existing negative-based approaches cannot be used to process uncertain data and extract the utility-driven insightful patterns.


\section{Preliminaries and Problem Statement} 
\label{sec:background}
We first introduce the uncertain database of the defined problem for utility-driven mining in this section.  Let $I$ = $\{i_1, i_2, \dots, i_m\} $ be a set of objects/items (symbols), and let the uncertain database be a set of transactions such as \textit{D} = \{$ T_{1}, T_{2}, \dots, T_{n} $\}, and each object/item in a transaction has an uncertain probability of existence such as $ p(i_{k}, T_{c}) $ \cite{aggarwal2009survey,bernecker2009probabilistic}. For each $ T_{c} $, it has the relationship such that $ i_{k}\in T_{c} $. A positive quantity value is defined as the internal utility, and denoted as $ q(i_{k}, T_{c}) $. This quantity value shows the quantity of $ i_{k} $ in $ T_{c} $. Let $ i_{k}\in I $ be related to a positive or negative value, which is defined as the external utility, and denoted as $ pr(i_{k}) $. A set of external utility of all items in the database is denoted as \textit{ptable} = \{$ pr(i_{1}), pr(i_{2}), \dots, pr(i_{m}) $\}. Table \ref{table:db} shows a simple example to illustrate the proposed approach.

\begin{table}[!htbp]
	\setlength{\abovecaptionskip}{0pt}
	\setlength{\belowcaptionskip}{0pt}	
	\centering
	\small
	\caption{A running example for the uncertain database.} 
	\label{table:db}
	\begin{tabular}{|c|c|c|c|}
		\hline
		\textbf{\textit{tid}} & \textbf{Item: quantity, probability)} & \textbf{total utility}    \\ \hline
		$ T_{1} $ & 	(\textit{a}:5, 0.6); (\textit{b}:3, 0.50); (\textit{d}:2, 0.9); (\textit{e}:4, 0.8)   &  \$107   \\ \hline
		$ T_{2} $ & 	(\textit{c}:1, 0.75); (\textit{d}:1, 0.9); (\textit{e}:2, 1.0)  &  \$24  \\ \hline
		$ T_{3} $ &	    (\textit{a}:4, 1.0); (\textit{b}:3, 1.0); (\textit{c}:2, 0.7);  (\textit{e}:1, 0.75)  &  \$50   \\ \hline
		$ T_{4} $ &	    (\textit{a}:3, 0.9); (\textit{c}:1, 0.9)  &    \$22   \\ \hline
		$ T_{5} $ &	    (\textit{b}:2, 1.0); (\textit{c}:4, 0.95); (\textit{d}:5, 0.6); (\textit{e}:4, 1.0) &   \$90   \\ \hline
	\end{tabular}
\end{table}

\begin{example} 
	Table \ref{table:db} is considered as the running example and can be described as follows: It has five transactions ($ T_1, T_2, \dots, T_5 $). Transaction $T_2$ shows that items $\{c\}$, $\{d\}$, and $\{e\}$ are purchased together in $T_2$, and their quantities are 1, 1, and 2, respectively. We also assume that that unit utilities of the items in the Table \ref{table:db} are defined in \textit{ptable} as: \textit{ptable} = \{$pr$(\textit{a}):\$8, $pr$(\textit{b}):\$5, $pr$(\textit{c}):-\$2, $pr$(\textit{d}):\$12, $pr$(\textit{e}):\$7\}. Thus, it is obvious to see that an item $(c)$ is sold at loss.
\end{example}

\begin{definition}[\textbf{utility measure}]
	The $u(i, T_c)$ indicates the utility of an item $i$ in the transaction $T_c$, which can be calculated as:  $u(i, T_c)$ = $pr(i) \times q(i, T_c)$. $u(X, T_c)$ shows the utility of an itemset $ X $ in a transaction $T_c$, which can be calculated as: $u(X, T_c)$ = $\sum_{i \in X} {u(i, T_c)}$. Note that $X \subseteq I$. Thus, the total utility of $ X $ in a database $ D $ can be denoted as $u(X)$, which can be calculated as: $u(X)$ = $ \sum_{X\subseteq T_{c} \wedge T_{c} \in D} {u(X, T_c)} $.
\end{definition}

\begin{example}
	For example in Table \ref{table:db}, the utility of $\{a\}$ in $T_1$ is calculated as: $u(a, T_1)$ = 5 $ \times $ \$8 = \$40.  The utility of $\{a,e\}$ in $T_1$ is calculated as: $u(\{a,e\}, T_1)$ =  $u(a, T_1)$ + $u(e, T_1)$ = 5 $ \times $ \$8 + 4 $ \times$ \$7 = 6\$8. Therefore, the utility of $\{a,e\}$ in the Table \ref{table:db} can be summed up as: $u(\{a,e\})$ =
	$u(a, T_1)$ + $u(e, T_1)$ + 
	$(u(a, T_3)$ + $u(e, T_3)$ = 
    (\$40 + \$28) + (\$32 + \$7) = \$107. For the $\{a,b,e\}$, the total utility of  $\{a,b,e\}$ can be calculated as: $u(\{a,b,e\})$ =
	$u(a, T_1)$ + $u(b, T_1)$ + $u(e, T_1)$ +
	$u(a, T_3)$ + $u(b, T_3)$ + $u(e, T_3)$ =
	(\$40 + \$15 + \$28) + (\$32 + \$15 + \$7) = \$137.
\end{example}

Since the discovered patterns are usually rare in realistic applications, the probabilistic frequent model \cite{bernecker2009probabilistic} cannot be directly applied for any utility-oriented applications \cite{lin2016efficient}. The common method of mining uncertain data uses the expected support-based model to mine the interesting patterns. For example, the expected support of \textit{X} is to sum up the support value of a pattern \textit{X} in a possible world $W_j$ as: $ expSup(X)$ = $\sum_{i=1} ^{|D|} (\prod _{x_i \in X} p(x_i, T_c)) $  \cite{aggarwal2009survey,bernecker2009probabilistic}. The definition of the expected probability measure of the mentioned problem is defined as below. 

\begin{definition}[\textbf{probability measure}]
	Let \textit{X} be a pattern (itemset) and $T_c$ be a transaction in the database \textit{D}. The probability of \textit{X} in $T_c$ is denoted as: $ p(X, T_c) $, which can be calculated as: $ p(X, T_c)$ = $\prod _{i \in X} p(i, T_c) $. Note that $ X \subseteq I $. The probability of \textit{X} in \textit{D} can thus be denoted as $ Pro(X) $, and defined as $ Pro(X)$ = $ \sum_{T_c \in D} (\prod _{i \in X} p(i, T_c)) $. 
\end{definition}

\begin{example}
	In Table \ref{table:db}, the probability of $\{a\}$ in $T_1$ can be calculated as: $ p(a, T_a)$ = 0.60.  The probability of $(a,e)$ in $T_1$ can then be calculated as: $ p(\{a,e\}, T_1)$ = $ p(a, T_1) \times p(e, T_1)$ = 0.6 $\times $ 0.8 = 0.48. The probability of  $(a)$ in $D$ can be calculated as: $ Pro(a)$ = 2.5, and the probability of $\{a,b,e\}$ in $D$ can be calculated as: $ Pro(\{a,b,e\})$ =  $p(\{a,b,e\}, T_1)$ + $p(\{a,b,e\}, T_3)$ = 0.24 + 0.75 = 0.99. 
\end{example}

\begin{definition}[\textbf{Potential High-Utility Itemset, PHUI}]
   Let \textit{X} be a pattern (itemset) in an uncertain database \textit{D}. We can say that $X$ is a PHUI if it satisfies two conditions: (1) $ u(X) \geq minUtil $, and (2) $ Pro(X) \geq minPro \times |D| $, in which \textit{minUtil} is a minimum utility threshold and \textit{minPro} is a minimum probability threshold. We can then conclude that interesting desired PHUI has a high expected probability and a high utility value.
\end{definition}

\textbf{Problem Statement.} With an uncertain database, a utility table (with a positive or negative utility value of each item), a minimum utility threshold (\textit{minUtiil}), and a minimum probability threshold (\textit{minPro}), the problem of utility-driven data analytics from uncertain data  is to discover the complete set of PHUIs.
 
For example, if the \textit{minPro} and  \textit{minUtil} are respectively set as \textit{minPro} = 0.25 and \textit{minUtil} = 20, the derived PHUIs from Table \ref{table:db} are \{\{$\{a\}$:\$96, 2.50\}; \{$\{b\}$:\$40, 2.50\}; \{$\{d\}$:\$96, 2.40\}; \{$\{e\}$:\$77, 3.55\}; \{$\{a,b\}$:\$102, 1.30\}; \{$\{a,c\}$:\$50, 1.51\}; \{$\{b,e\}$:\$103, 2.15\}; \{$\{c,e\}$:\$35, 2.225\}; \{$\{d,e\}$:\$166, 2.22\}; \{$\{b,c,e\}$:\$48, 1.475\}\}. Here, \{$\{b\}$:\$40, 2.50\} indicates that the utility value and expected probability of pattern $\{b\}$ are \$40 and 2.50, respectively.

\section{Proposed Approach for Mining PHUIs} 
\label{sec:algorithm}
 In this section, an utility-driven data analytics framework named HUPNU is presented to discover the Potential High-Utility Itemsets (PHUIs) from an uncertain database. We further design the Probability-Utility list with positive and negative utilities (PU$^{\pm}$-list). In addition, several pruning strategies are presented here to reduce the search space of the potential HUIs. More details are described below. 

\subsection{Positive and Negative Unit Utilities}
The monotonic/anti-monotonic proprieties cannot be held in utility mining \cite{liu2005two}. In this situation, the utility of a pattern may be higher, lower, or equal to any of its subset patterns. The search space to discover the meaningful and useful patterns may become large if many items exist in the database. The TWU model \cite{liu2005two} was presented to avoid the problem of ``combinational exploration'', which aims at reducing the search space for mining the high-utility itemsets. Several extensions are then extensively studied \cite{liu2012mining,liu2005two,tseng2013efficient} to improve the mining performance. However, those approaches, including the TWU model, do not consider both the positive and negative unit utilities of items, which are addressed in this paper.  Moreover, the existing works do not consider the above situations in the uncertain database for discovering the PHUIs. We define the following properties for the addressed problem as follows:

\begin{property}
We first assume that $pu(X)$ and $nu(X)$ are respectively the sum of positive and negative utility of an item $X$ in a database. Thus, we can obtain that the utility of \textit{X} is calculated as: $u(X)$ = $pu(X) + nu(X)$, where $nu(X) \leq u(X) \leq pu(X)$ holds. 
\end{property}

From the above-stated property, we can conclude that $u(X)$ and $nu(X)$ cannot be straightforwardly used as the over-estimated utility of a pattern (\textit{X}). Moreover, even $ pu(u) $ is the upper-bound value of \textit{X}, the downward closure property for the superset of \textit{X} cannot be held since both the positive and negative unit utilities of the items are considered in the addressed problem. We then re-utilize the traditional TWU \cite{liu2005two} property to establish a new over-estimated value of the discovered pattern.
 
\begin{definition} 
	In HUPM, the transaction utility (\textit{TU}) is defined as: $TU(T_c)$ = $\sum_{i \in T_c}{u(i, T_c)}$. In this paper, by considering both positive and negative unit utility of items, we then re-define the transaction utility as:  $RTU(T_c)$ = $\sum_{i \in T_c \wedge pr(i) >0}{u(i, T_c)}$. The transaction-weight utilization of an itemset \textit{X} is also then redefined as \textit{RTWU}: $RTWU(X)$ = $\sum_{X\subseteq T_{c} \wedge T_{c} \in D}{RTU(T_c)}$. The above-stated definitions can be used to hold the \textit{downward closure} property for mining the required PHUIs. Note that $ RTWU(X) \geq u(X)$. 
\end{definition}

\begin{example}
	For example, the $RTU(T_2)$ is \$26. Consider two patterns $\{a\}$ and $\{a,b,e\}$, the $RTWU(\{a\})$ = \$185 and $RTWU(\{a,b,e\})$ = \$161; both of them are the over-estimated values of  $u(\{a\})$ = \$96 and $u(\{a,b,e\})$ = \$137.
\end{example}

\subsection{Probability Utility (PU$^{\pm}$)-List Structure}

In the developed HUPNU algorithm, the processing order of the items in the database is defined as $\succ$, which holds the properties as: (1) the items are then sorted as the \textit{RTWU}-ascending order; and (2) negative items always succeed positive ones. The designed Probability Utility (PU$^{\pm}$)-List structure used in the HUPNU algorithm is stated as follows:

\begin{definition}[\textbf{PU$^{\pm}$-list}]
	Let \textit{X} be an itemset in the database. The PU$^{\pm}$-list of \textit{X} is denoted as: \textit{X.PUL}, and it consists of five elements: (1) \textit{tid} represents the transaction ID in the database; (2) $pro$ is the expected probability of $X$ in $T_{tid}$, and $pro(X,T_{tid})\geq 0 $; (3) $pu$ shows the positive utility of $X$ in $T_{tid}$, and  $u(X,T_{tid})\geq 0 $; (4) $nu$ represents the negative utility of $X$ in $T_{tid}$, and $u(X,T_{tid}) < 0 $; (5) $rpu$ represents $\sum_{i \in T_{tid} \wedge i \succ x \forall x \in X}{u(i, T_{tid})\geq 0}$, which keeps only a positive utility value for the remaining items. 
\end{definition}

\begin{figure}[!htbp]
	\setlength{\abovecaptionskip}{0pt}
	\setlength{\belowcaptionskip}{0pt}
	\centering
	\includegraphics[scale=0.52]{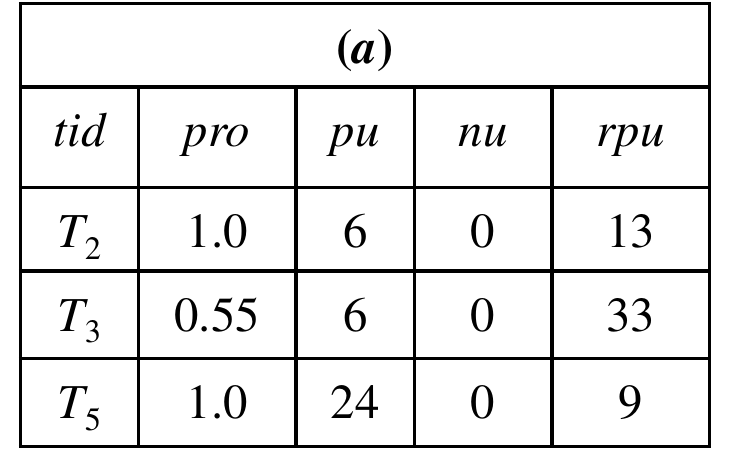}
	\caption{Constructed PU$^{\pm}$-list of pattern $\{a\})$} 
	\label{figPU-list_A}
\end{figure}

\begin{figure}[!htbp]
	\setlength{\abovecaptionskip}{0pt}
	\setlength{\belowcaptionskip}{0pt}
	\centering
	\includegraphics[scale=0.38]{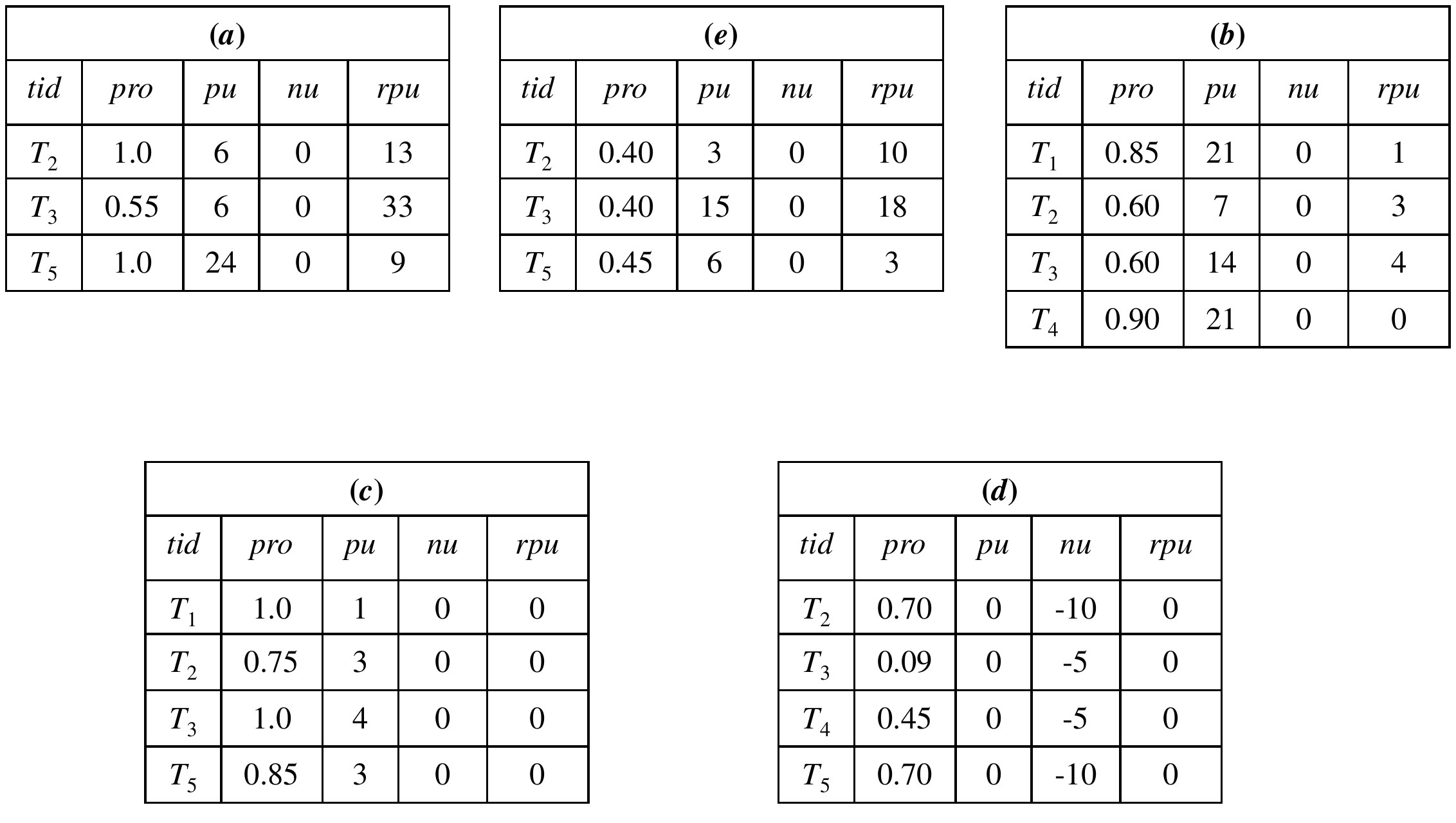}
	\caption{Constructed PU$^{\pm}$-list of the running example} 
	\label{figPU-list_1itemsets}
\end{figure}

\begin{example}
	The search space of the developed HUPNU approach can be shown as the utility-based Set-enumeration tree \cite{lin2017fdhup}, called a PU$^{\pm}$-tree, based on the developed PU$^{\pm}$-list. Since \{$RTWU(a)$: \$185; $RTWU(b)$: \$259; $RTWU(c)$: \$202; $RTWU(d)$: \$231; and $RTWU(e)$: \$285\}, the designed processing order $\succ$ of the running example can be represented as: $\{a \succ d \succ b \succ e \succ c\}$. Thus, the constructed PU$^{\pm}$-list for all items in Table \ref{table:db} is shown in Fig. \ref{figPU-list_1itemsets}. We have $\{a\}.PUL$ = $\{(T_1$, 0.60, \$40, \$0, \$67),
	$(T_3$, 1.00, \$32, -\$4, \$22),
	$(T_5$, 0.90, \$24, -\$2, \$0)\};
	$\{c\}.PUL$ = $\{(T_2$, 0.75, \$0, -\$2, \$0),
	$(T_3$, 0.70, \$0, -\$4, \$0),
	$(T_4$, 0.90, \$0, -\$2, \$0),
	$(T_5$, 0.95, \$0, -\$8, \$0)\};
	$\{a,c\}.PUL$ = $\{T_3$, 0.70, \$32, -\$4, \$0),
	$(T_4$, 0.81, \$24, -\$2, \$0)\}.
\end{example}


\begin{definition} 
	The sums of the total utility, \textit{pu} values, \textit{nu} values, and \textit{rpu} values in the PU$^{\pm}$-list of \textit{X} are respectively denoted as: \textit{SUM(X.iu)}, \textit{SUM(X.pu)}, \textit{SUM(X.nu)}, and \textit{SUM(X.rpu)}, which can be respectively defined as:\\
	$ SUM(X.pu)$ = $\sum_{X\in T_{c}\wedge T_{c}\subseteq D}X.pu(T_{c}) $;\\
    $ SUM(X.nu)$ = $\sum_{X\in T_{c}\wedge T_{c}\subseteq D}X.nu(T_{c}) $;\\
	$ SUM(X.rpu)$ = $\sum_{X\in T_{c}\wedge T_{c}\subseteq D}X.rpu(T_{c}) $;\\
    $ SUM(X.iu)$ = $SUM(X.pu)$ + $SUM(X.nu)$.	
\end{definition}

\begin{lemma}
	\label{lemma-pu}
	For two patterns, such as \textit{Y} and \textit{Z} in the PU$^{\pm}$-tree, if the relationship holds: (1) $ SUM(Y.pu) + SUM(Y.rpu) $ - $ \sum_{\forall T_{c}\in D, Y \subseteq T_{c} \bigwedge Z \nsubseteq T_{c}}(Y.pu + Y.rpu) < minUtil $, or (2) $ SUM(Y.pro) $ - $ \sum_{\forall T_{c}\in D, Y \subseteq T_{c} \bigwedge Z \nsubseteq T_{c}}(Y.pro) < minPro \times |D|$, the (\textit{YZ}), or any superset of it, will not be a PHUI.
\end{lemma}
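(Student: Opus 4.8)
The plan is to show that the left-hand side of condition~(1) is an upper bound on $u(W)$ for every node $W$ of the PU$^{\pm}$-tree with $YZ \subseteq W$, and that the left-hand side of condition~(2) is likewise an upper bound on $Pro(W)$ for those same patterns. Once both upper bounds are in place the conclusion is immediate: a PHUI must satisfy both $u(W) \geq minUtil$ and $Pro(W) \geq minPro \times |D|$, so if either upper bound drops below the matching threshold, neither $YZ$ nor any of its supersets can meet both requirements, and hence none of them is a PHUI.

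First I would rewrite the two subtracted sums. Since $SUM(Y.pu) = \sum_{Y \subseteq T_c} Y.pu(T_c)$ and $SUM(Y.rpu) = \sum_{Y \subseteq T_c} Y.rpu(T_c)$, deleting the contribution of the transactions in which $Y$ occurs but $Z$ does not leaves exactly the sum over transactions containing \emph{both} $Y$ and $Z$; that is, the left side of~(1) equals $\sum_{YZ \subseteq T_c}\big(Y.pu(T_c) + Y.rpu(T_c)\big)$, and, because $Y.pro(T_c) = p(Y,T_c)$ by the definition of the list, the left side of~(2) equals $\sum_{YZ \subseteq T_c} p(Y,T_c)$.

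Next I would establish the utility bound transaction by transaction. Fix a node $W$ with $YZ \subseteq W$ and a transaction $T_c$ with $W \subseteq T_c$. By the construction of the set-enumeration tree under the order $\succ$, every item of $W \setminus Y$ succeeds all items of $Y$, so the extension utility $u(W \setminus Y, T_c)$ is at most the total \emph{positive} utility of the remaining items in $T_c$, which is precisely $Y.rpu(T_c)$ (dropping the negative remaining items only raises the value). Combining this with $u(Y,T_c) \le Y.pu(T_c)$, which follows from the property $nu(X)\le u(X)\le pu(X)$ established in Property~1 applied within $T_c$, gives $u(W,T_c) \le Y.pu(T_c) + Y.rpu(T_c)$. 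Summing over the transactions containing $W$, and noting that these form a subset of the transactions containing $YZ$ while each summand is non-negative, yields $u(W) \le \sum_{YZ \subseteq T_c}\big(Y.pu(T_c)+Y.rpu(T_c)\big)$, the left side of~(1). The probability bound is obtained identically: each $p(i,T_c)\in[0,1]$, so appending items can only shrink the product, giving $p(W,T_c) \le p(Y,T_c)$; summing over the (fewer) transactions containing $W$ then gives $Pro(W) \le \sum_{YZ \subseteq T_c} p(Y,T_c)$, the left side of~(2).

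The step I expect to require the most care is the utility bound, precisely because the presence of negative unit utilities destroys the usual TWU-style monotonicity noted at the opening of Section~\ref{sec:algorithm}, so one cannot simply reuse the classical over-estimate. The argument hinges on the fact that $pu$ and $rpu$ each retain only the positive utilities, thereby over-estimating $Y$ together with any extension while silently discarding every negative contribution. Two points must be invoked explicitly: that $W \setminus Y$ genuinely lies among the ``remaining'' items counted by $rpu$ (guaranteed by the ordering $\succ$, under which negatives succeed positives), and that restricting a sum of non-negative terms to fewer transactions can only decrease it. With these two conventions pinned down, both bounds follow routinely and the lemma is established.
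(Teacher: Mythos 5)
The paper states this lemma without any proof, so your attempt has to stand on its own. Your overall plan is the right one: rewrite each left-hand side as a sum over the transactions containing $YZ$, bound $u(W,T_c) \le Y.pu(T_c)+Y.rpu(T_c)$ and $p(W,T_c) \le p(Y,T_c)$ transaction by transaction, and then sum over the (fewer) transactions containing $W$. The rewriting step and the entire probability half are correct. The utility half, however, rests on a step that is false as stated: you fix ``a node $W$ of the PU$^{\pm}$-tree with $YZ \subseteq W$'' and assert that, by the construction of the set-enumeration tree, every item of $W \setminus Y$ succeeds all items of $Y$. That is not so. The set-enumeration tree contains \emph{every} itemset as a node exactly once, so a node $W \supseteq YZ$ need not lie in the subtree rooted at $YZ$, and its extra items may precede the items of $Y$ in the processing order. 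Concretely, suppose the order processes $a$, then $b$, then $d$, with $pr(a)=\$100$, $pr(b)=pr(d)=\$1$, one transaction $\{a,b,d\}$ with all quantities and probabilities $1$, $minUtil=\$50$, and $minPro\times|D|=0.5$. For $Y=\{b\}$, $Z=\{d\}$ condition (1) of the lemma holds ($1+1=2<50$), yet the superset $W=\{a,b,d\}$ has $u(W)=\$102$ and $Pro(W)=1$, so $W$ \emph{is} a PHUI: the utility of $a$ is counted neither in $Y.pu$ nor in $Y.rpu$, and your claimed bound breaks. (This also shows the lemma itself is false if ``any superset'' is read literally; your proof inherits that flaw rather than repairing it.)

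The repair is to restrict $W$ to \emph{extensions of $YZ$ in the PU$^{\pm}$-tree}, i.e., $W = YZ \cup E$ where every item of $E$ succeeds all items of $YZ$ under $\succ$. For such $W$ your per-transaction argument is exactly right: $u(Y,T_c)\le Y.pu(T_c)$ by Property 1, and every item of $W\setminus Y$ (namely $z$ and the items of $E$) succeeds all items of $Y$, so its positive contributions are covered by $Y.rpu(T_c)$ while its negative contributions only help. This restricted statement is also all that the PU-Prune strategy needs for soundness: discarding the subtree rooted at $YZ$ loses nothing, because supersets such as $\{a,b,d\}$ above are enumerated elsewhere in the tree (as descendants of $\{a\}$), not in the pruned subtree. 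Note the asymmetry your write-up glosses over: the probability bound $p(W,T_c)\le p(Y,T_c)$ uses only $p(i,T_c)\in[0,1]$ and therefore holds for arbitrary supersets, whereas the utility bound is valid only for tree extensions because of the remaining-utility term. Finally, your closing parenthetical attributes the needed guarantee to the rule that ``negatives succeed positives'' in $\succ$; that feature is irrelevant to this point --- what is needed is precisely that $W$ descends from $YZ$ in the tree.
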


\begin{strategy}[\textbf{PU-Prune strategy}]
 	If Lemma \ref{lemma-pu} holds, then the PU$^{\pm}$-list construction of the pattern $Y$ can be avoided; and any of its superset will not be a PHUI.
\end{strategy}

According to the designed PU-Prune strategy, the huge size of the unpromising \textit{k}-patterns (\textit{k} $\geq$ 2, and $k$ is the size of the items within the itemset) of the search space can be greatly filtered. Let $Py$ denote an itemset, and $ y $ denote an item; $ Py $ is defined as $ P\cup y $ and \textit{y} is before \textit{z}. The construction procedure of the PU$^{\pm}$-list is shown in Algorithm \ref{AlgorithmConstruct2}. It takes the PU$^{\pm}$-lists of $P$, $Py$, and $Pz$ as the inputs, and returns the PU$^{\pm}$-list of $Pyz$ as output. The PU$^{\pm}$-lists of $k$-itemsets ($k\geq$ 2) can be easily built using a simple join operation of (\textit{k}-1)-itemsets; multiple database scans can be greatly avoided, and the computational cost of the run time can be reduced. If $P.PUL$ is empty, the PU$^{\pm}$-list of a 2-itemset is constructed (Line 9). Otherwise, the PU$^{\pm}$-list of a $k$-itemset ($k \geq $ 3) is constructed (Lines 5 to 7).  For these optimization purposes, the joint operation of the PU$^{\pm}$-lists of $P$, $Py$, and $Pz$ can be constructed by a binary search.


\begin{algorithm}
	\caption{Construction procedure}
	\label{AlgorithmConstruct2}
	\begin{algorithmic}[1]
		\REQUIRE {$ P $, $ Py$, $ Pz $.} 
		\ENSURE {\textit{Pyz} with its \textit{Pyz.PUL}.}
		
		\STATE	$ Pyz.PUL \leftarrow \emptyset$;
		\STATE	set $ Probability $ = \textit{SUM}$(Y.pro)$, $ Utility$ = \textit{SUM}$(Y.pu)$ + \textit{SUM}$(Y.rpu) $;
		
		\FOR {each tuple $ex \in Py.PUL$}
		\IF{$\exists ez \in Pz.PUL$ and $ey.tid$ = $eyz.tid$}
		\IF{$P.PUL \neq \emptyset$ }
		\STATE search each element $e \in P.PUL$ such that $e.tid$ = $ey.tid$.;
		\STATE $ eyz \leftarrow $ $<$$ey.tid, ey.pro \times ez.pro / e.pro$, $ey.pu + ez.pu - e.pu$, $ey.nu  + ez.nu - e.nu$, $ez.rpu$$>$;
		
		\ELSE
		\STATE $ eyz \leftarrow $ $<$$ey.tid, ey.pro \times ez.pro$, $ey.pu + ez.pu$, $ey.nu + ez.nu, ez.rpu$$>$;
		\ENDIF
		\STATE $ Pyz.PUL \leftarrow Pyz.PUL \cup \{eyz\}$;
		
		\ELSE
		\STATE \textit{Probability} = \textit{Probability} - \textit{ey.pro}, \textit{Utility} = \textit{Utility} - \textit{ey.pu} - \textit{ey.rpu};
		\IF{\textit{Probability} $ < minPro  \times |D|$ or $Utility < minUtil $ }
		\STATE \textbf{return} \textit{null}.  
		\ENDIF			
		\ENDIF
		\ENDFOR
		
		\STATE \textbf{return} \textit{Pyz}	
	\end{algorithmic}
\end{algorithm}


\subsection{Proposed Pruning Strategies}

In this section, we discuss several pruning strategies based on the developed PU$^{\pm}$-list for later mining progress. The developed strategies can greatly help to remove the uncompromising candidates in the early stages, and thus the search space that reveals the actual PHUIs can become smaller. The node of the (\textit{k}-1)-itemset in the designed PU$^{\pm}$-tree is denoted as: $X^{k-1} (k \geq 2)$, and any superset of it is denoted as: $ X^{k} $.  

\begin{theorem}[\textbf{Downward closure property of \textit{RTWU} and \textit{probability}}]
	In the PU$^{\pm}$-tree, the correctness of $ Pro(X^{k-1}) \geq Pro(X^{k}) $ and \textit{RTWU}(\textit{X}$ ^{k-1} $) $\geq$ \textit{RTWU}(\textit{X$ ^{k} $}) hold.
 \end{theorem}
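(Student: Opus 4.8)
The plan is to establish the two inequalities separately, since each follows from a different elementary property, but both rely on the same structural fact: any transaction containing the $k$-itemset $X^{k}$ must also contain its $(k-1)$-subset $X^{k-1}$. Formally, $\{T_c \in D : X^{k} \subseteq T_c\} \subseteq \{T_c \in D : X^{k-1} \subseteq T_c\}$, because $X^{k-1} \subseteq X^{k}$. This set inclusion on the supporting transactions is the workhorse for both parts.

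First I would handle the probability inequality. By the probability measure (Definition of $Pro$), we have $Pro(X) = \sum_{T_c \in D}\prod_{i \in X} p(i, T_c)$, where the product is taken only over transactions actually containing $X$. Since $X^{k-1} \subseteq X^{k}$, every factor $p(i, T_c)$ appearing in the product for $X^{k-1}$ also appears in the product for $X^{k}$, together with the extra factor $p(i', T_c) \in [0,1]$ for the additional item $i' \in X^{k} \setminus X^{k-1}$. Because every existence probability lies in $[0,1]$, multiplying by this extra factor can only shrink (or leave unchanged) each per-transaction term, so $\prod_{i \in X^{k}} p(i, T_c) \leq \prod_{i \in X^{k-1}} p(i, T_c)$ on each transaction containing $X^{k}$; summing over the (smaller) support set of $X^{k}$ and then extending the sum to the larger support set of $X^{k-1}$ with nonnegative terms yields $Pro(X^{k}) \leq Pro(X^{k-1})$.

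Next I would handle the \textit{RTWU} inequality, which is even more direct. From the definition $RTWU(X) = \sum_{X \subseteq T_c \wedge T_c \in D} RTU(T_c)$, the quantity $RTWU$ sums the same per-transaction weights $RTU(T_c)$ over the support set of $X$; it does not depend on $X$ except through which transactions qualify. Since the support of $X^{k}$ is a subset of the support of $X^{k-1}$, and each $RTU(T_c) = \sum_{i \in T_c \wedge pr(i)>0} u(i,T_c)$ is a sum of nonnegative terms (only positive-utility items contribute), we are summing the same nonnegative weights over a smaller index set. Hence $RTWU(X^{k}) \leq RTWU(X^{k-1})$.

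The only genuine subtlety — and the step I would flag as the main obstacle — is the \textit{RTWU} part, not the probability part. One must verify that the redefinition $RTU(T_c) = \sum_{i \in T_c \wedge pr(i)>0} u(i,T_c)$ truly produces a monotone (anti-monotone in itemset size) upper bound even in the presence of negative-utility items, which is precisely why the authors replaced the ordinary $TU$ by $RTU$. Since $RTU$ discards all negative contributions, each $RTU(T_c)$ is manifestly nonnegative and, crucially, is fixed per transaction rather than recomputed per itemset; this is what rescues the downward-closure argument that would otherwise fail for $u(X)$ itself (recall $u$ is neither monotone nor anti-monotone, as noted via Property~1). Once this nonnegativity and transaction-level definition are in hand, the inclusion of support sets closes the argument immediately, and I would conclude by noting that both inequalities together justify pruning any superset of an itemset whose $RTWU$ or probability already falls below the respective threshold.
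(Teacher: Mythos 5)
Your proposal is correct and follows essentially the same route as the paper's own proof: the support-set inclusion $\{T_c : X^{k} \subseteq T_c\} \subseteq \{T_c : X^{k-1} \subseteq T_c\}$ combined with the per-transaction shrinkage of the probability product, and the summation of fixed per-transaction weights $RTU(T_c)$ over a smaller support set. Your explicit observation that the nonnegativity of $RTU(T_c)$ (obtained by discarding negative-utility items) is what makes the subset-sum argument valid is a point the paper's proof glosses over (it even writes $tu(T_c)$ there instead of $RTU(T_c)$), so your version is, if anything, slightly more careful.
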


\begin{proof}
	Since $ p(X, T_c)$ = $\prod _{i \in X} p(i, T_c) $ can be held for any $T_c$ in $D$, thus we can have that: $ p(X^{k},T_{c}) \leq p(X^{k-1},T_{c}) $. Since $ X^{k-1} $ is the subset of $ X^{k} $; the $tids$ of $ X^k $ is the subset of the $ tids $ of $X^{k-1} $. We then can have that: $Pro(X^{k})$ = $\sum _{X^{k}\subseteq T_{c}\wedge T_{c}\in D}p(X^{k},T_{c})$ $\leq \sum _{X^{k-1}\subseteq T_{c}\wedge T_{c}\in D}p(X^{k-1},T_{c})$ = $Pro(X^{k-1})$. Thus, it can be concluded that \textit{Pro}(\textit{X}$ ^{k-1} $) $\geq$ \textit{Pro}(\textit{X$ ^{k} $}). Moreover, \textit{X}$ ^{\textit{k}-1} $ $\subseteq$ \textit{X$ ^{k} $}, $RTWU(X^{k})$ = $\sum _{X^{k}\subseteq T_{c}\wedge T_{c}\in D}tu(T_{c})$ $\leq \sum _{X^{k-1}\subseteq T_{c}\wedge T_{c}\in D}tu(T_{c})$ = $RTWU(X^{k-1})$ holds.
\end{proof}

\begin{lemma} [\textbf{Upper-bound probability of PHUI}] 
	The summed up probability of any node in the PU$^{\pm}$-tree is greater than the summed up probabilities of its supersets.
\end{lemma}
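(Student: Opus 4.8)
The plan is to recognize that this lemma is the transitive, fully-generalized form of the \emph{probability} half of the Downward closure property already established in the preceding Theorem. The ``summed up probability'' of a node $X$ is $SUM(X.pro) = Pro(X) = \sum_{T_c \in D} p(X, T_c)$, so the claim amounts to $Pro(X) \geq Pro(Y)$ whenever $X \subseteq Y$, for an \emph{arbitrary} superset $Y$ (not merely an immediate one obtained by adding a single item). Note that ``greater than'' in the statement is understood as the weak inequality $\geq$, since equality can occur.

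First I would assemble the two ingredients. From the probability measure definition, $p(Y, T_c) = \prod_{i \in Y} p(i, T_c)$, and since every $p(i, T_c) \in [0,1]$, writing $Y = X \cup R$ yields $p(Y, T_c) = p(X, T_c) \cdot \prod_{i \in R} p(i, T_c) \leq p(X, T_c)$ in each transaction. Second, because $X \subseteq Y$, any transaction containing $Y$ must also contain $X$, so the set of \textit{tids} supporting $Y$ is a subset of those supporting $X$.

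Then I would combine these to bound the sums. Restricting to transactions that contain $Y$, each term obeys $p(Y, T_c) \leq p(X, T_c)$; extending the summation back to all transactions containing $X$ only appends further non-negative terms $p(X, T_c) \geq 0$. Hence $Pro(Y) = \sum_{Y \subseteq T_c} p(Y, T_c) \leq \sum_{Y \subseteq T_c} p(X, T_c) \leq \sum_{X \subseteq T_c} p(X, T_c) = Pro(X)$, which is the desired inequality. Equivalently, and more economically given the Theorem, the bound for an arbitrary superset follows by applying the per-level inequality $Pro(X^{k-1}) \geq Pro(X^{k})$ along any chain of single-item extensions leading from $X$ up to $Y$ and chaining the resulting inequalities.

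I do not expect a genuine obstacle here, as the result is essentially immediate from the Theorem. The only point requiring care is to make explicit that ``superset'' is intended in the general multi-item sense, so that the inequality must be argued either directly as above or by transitivity of the single-step result, rather than read off verbatim from the Theorem, whose statement is phrased only for a size-$(k-1)$ node versus its size-$k$ superset.
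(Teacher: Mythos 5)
Your proof is correct and follows essentially the same route as the paper: the paper does not give a separate proof of this lemma, treating it as an immediate consequence of the preceding Theorem, whose proof uses exactly your two ingredients (the per-transaction bound $p(X^{k},T_c)\leq p(X^{k-1},T_c)$ from probabilities lying in $[0,1]$, and the inclusion of supporting \textit{tids}). Your added remark that arbitrary supersets are handled by chaining the single-step inequality is a sound and worthwhile clarification of what the paper leaves implicit.
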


\begin{strategy}
	The \textit{RTWU} and the probability value of each item can be easily obtained during the initial database scan. Thus, if the summed up probability and \textit{RTWU} of an itemset \textit{X} do not achieve the two conditions of PHUI, \textit{X} and any supersets of it can be pruned directly. 
\end{strategy}

\begin{strategy}
	While the depth-first search is performed to traverse the PU$^{\pm}$-tree, if the summed up probability of tree node \textit{X} such as $ Pro(X)$ is no larger than $ minPro \times |D| $, then none of the supersets of this node are considered to be a PHUI.
\end{strategy}

\begin{lemma} [\textbf{Upper-bound utility of the PHUI}] 
	For a node \textit{X} in the PU$^{\pm}$-tree, the summed values of $SUM(X.pu)$ and $SUM(X.rpu)$ are always equal to or larger than any of its supersets.  
\end{lemma}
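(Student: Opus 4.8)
The plan is to prove the bound one transaction at a time and then aggregate, reusing the subset-of-$tids$ observation that drives the proof of the preceding downward-closure theorem. Fix a node $X$ and let $X'$ be any superset of $X$ in the PU$^{\pm}$-tree. By the construction of the set-enumeration tree under the processing order $\succ$, every item appended to $X$ to reach $X'$ succeeds all items of $X$; hence I may write $X' = X \cup W$ where each $i \in W$ satisfies $i \succ x$ for all $x \in X$. The target inequality is $u(X') \leq SUM(X.pu) + SUM(X.rpu)$.

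First I would bound the utility of $X'$ within a single transaction. For any $T_c$ with $X' \subseteq T_c$, decompose $u(X', T_c) = u(X, T_c) + u(W, T_c)$. For the first term, the PU$^{\pm}$-list splits $u(X, T_c)$ into its nonnegative part $pu(X, T_c)$ and its nonpositive part $nu(X, T_c)$ (this is the stated property $u(X) = pu(X) + nu(X)$ applied per transaction), so $u(X, T_c) \leq pu(X, T_c)$. For the second term, $W$ is a subset of the items that follow $X$ in $\succ$, i.e. of the ``remaining'' items whose positive utilities are exactly what $rpu(X, T_c)$ accumulates; discarding the negative remaining items (which $rpu$ already omits) and any remaining items not in $W$ can only raise the sum, giving $u(W, T_c) \leq rpu(X, T_c)$. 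Combining the two yields $u(X', T_c) \leq pu(X, T_c) + rpu(X, T_c)$.

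Next I would sum this inequality over all transactions containing $X'$ and then enlarge the index set to all transactions containing $X$. Since $X \subseteq X'$, every $T_c$ containing $X'$ also contains $X$, so the transactions indexing $u(X')$ form a subset of those indexing $SUM(X.pu) + SUM(X.rpu)$; as each summand $pu(X, T_c) + rpu(X, T_c)$ is nonnegative, extending to the larger index set only increases the total. This gives $u(X') = \sum_{X' \subseteq T_c} u(X', T_c) \leq \sum_{X \subseteq T_c}\bigl( pu(X, T_c) + rpu(X, T_c) \bigr) = SUM(X.pu) + SUM(X.rpu)$, which is the claim.

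The main obstacle is the per-transaction remaining bound $u(W, T_c) \leq rpu(X, T_c)$, since this is precisely where the restriction of $rpu$ to \emph{positive} remaining utilities is indispensable. Were $rpu$ to include negative remaining utilities, then a subset $W$ that omits some negative remaining item could satisfy $u(W, T_c) > rpu(X, T_c)$, destroying the bound and the pruning it underlies. I would therefore state explicitly that $rpu$ deliberately over-counts by keeping only the positive remaining items, so that every subset $W$ of the remaining items is dominated by $rpu(X, T_c)$ irrespective of which negative items $W$ does or does not contain; the nonnegativity of $pu + rpu$ then legitimizes the final enlargement of the summation index.
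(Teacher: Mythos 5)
Your proof is correct, but there is no proof in the paper to compare it against: the lemma is simply asserted, followed by the remark that $u(X^{k}) \leq SUM(X^{k-1}.pu) + SUM(X^{k-1}.rpu)$, which is exactly the inequality you establish. Your route --- the per-transaction decomposition $u(X',T_c) = u(X,T_c) + u(W,T_c)$, the bound $u(X,T_c) \leq pu(X,T_c)$ obtained by discarding the nonpositive part, the bound $u(W,T_c) \leq rpu(X,T_c)$ because $W$ is a subset of the remaining items and $rpu$ keeps every positive remaining utility while omitting the negative ones, and finally the enlargement of the summation index from transactions containing $X'$ to those containing $X$, legitimate since each summand $pu(X,T_c)+rpu(X,T_c)$ is nonnegative --- is sound, and it correctly isolates the one point where the design of $rpu$ is indispensable, i.e., exactly where a naive TWU-style bound breaks down in the presence of negative unit utilities.

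Two refinements are worth making. First, your restriction to supersets of the form $X' = X \cup W$ with every item of $W$ succeeding all items of $X$ under $\succ$ is not merely convenient but necessary: the claim is false for arbitrary supersets (an item that precedes $X$ in $\succ$ is not counted in $rpu(X,T_c)$ yet may carry a large positive utility), so ``any of its supersets'' must be read as ``any descendant in the PU$^{\pm}$-tree,'' which is the scope you adopt. Second, the lemma's literal wording compares $SUM(X.pu)+SUM(X.rpu)$ against the \emph{same quantity} for the superset, not against $u(X')$; your per-transaction argument yields this stronger, transitive form with one extra line. Since the remaining items of $X'$ in $T_c$ form a subset of the remaining items of $X$ excluding $W$, one has $rpu(X',T_c) \leq rpu(X,T_c) - pu(W,T_c)$, while $pu(X',T_c) = pu(X,T_c) + pu(W,T_c)$; adding these gives $pu(X',T_c)+rpu(X',T_c) \leq pu(X,T_c)+rpu(X,T_c)$, and summing with the same index-set enlargement gives $SUM(X'.pu)+SUM(X'.rpu) \leq SUM(X.pu)+SUM(X.rpu)$. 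This transitive version subsumes yours (because $u(X') \leq SUM(X'.pu) \leq SUM(X'.pu)+SUM(X'.rpu)$) and is what actually licenses pruning an entire subtree in Strategy 4: the upper bound itself, not just the true utility, is monotone non-increasing along tree descent.
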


We can conclude from the above lemmas that the summed up utility w.r.t. $u(X^{k})$ of an itemset $ X^{k} $ is always no greater than the summed values of \textit{SUM}$(X^{k-1}.pu)$ and \textit{SUM}$(X^{k-1}.rpu)$. Therefore, it is ensured that the transitive extensions with items having positive or negative utilities can hold the downward closure property. The following strategies can be designed based on the above upper-bound constraints:

\begin{strategy}
	While the depth-first search is performed to traverse the PU$^{\pm}$-tree, if the summed up values of \textit{SUM}$(X^{k-1}.pu)$ and \textit{SUM}$(X^{k-1}.rpu)$ are less than $minUtil$, any supersets of \textit{X} cannot be a PHUI. Those nodes in the tree can be directly ignored and pruned to reduce the search space for mining the PHUIs. 
\end{strategy}

\begin{strategy}
	For an itemset \textit{X} in the PU$^{\pm}$-list, if the \textit{X.PUL} is \textit{null} or the $Pro(X)$ value is less than $ minPro \times |D| $, \textit{X} cannot be considered a PHUI; none of its superset (node) is a PHUI. Therefore, the construction procedure of PU$^{\pm}$-lists of \textit{X}'s supersets can be ignored.
\end{strategy}
 
The efficient Estimated Utility Co-occurrence Pruning (EUCP) strategy \cite{fournier2014fhm} is also utilized here for the designed HUPNU algorithm. Thus, the Estimated Utility Co-occurrence Structure (\textit{EUCS}) is built to keep the \textit{RTWU} values of the 2-itemsets. More information can be found in \cite{fournier2014fhm}.

\begin{strategy}
	Let $X$ be a 2-itemset, which is also one of the nodes in the Set-enumeration PU$^{\pm}$-tree. While the depth-first search is performed,  if the \textit{RTWU} of $ X $ is no greater than the $ minUtil $ based on the built EUCS, \textit{X} and any supersets of \textit{X}  are not considered to be the PHUI; the construction progress of the PU$^{\pm}$-tree for \textit{X} and the supersets of it can be ignored.
\end{strategy}

Based on the above proposed pruning strategies, the designed HUPNU is shown in detail below.

\subsection{The Procedure of HUPNU}
In this section, the main procedure of the developed HUPNU is shown in Algorithm \ref{Algorithm_HUPNU}. First, it examines the uncertain database to find the values of \textit{RTWU} (with the redefined $RTU$) and $Pro(i)$  of each item (Line 1). The expected support and \textit{RTWU} of each item in the set $I^*$ are then checked against the $ minPro \times |D| $ and $minUtil$, respectively, and the satisfied items are then discovered and obtained. In this step, the other items can be ignored directly since they could not be the potential HUI (Line 2). The database is then scanned again (Line 4) to re-order the items in the transactions according to the designed order as $\succ$ (Line 3). In addition, the items in the transactions are then re-ordered based on the total order $\succ$ while performing the database scan. After that, the PU$^{\pm}$-list of each 1-item $i \in I^*$ is constructed, respectively, and the depth-first search is recursively performed by the \textit{Search} procedure with the empty itemset $\emptyset$, the set of single items $I^*$, \textit{minPro}, \textit{minUtil}, and the EUCS (Line 5).

\begin{algorithm}
	\caption{HUPNU main procedure}
	\label{Algorithm_HUPNU}
	\begin{algorithmic}[1]
		\REQUIRE {$D$, \textit{minPro}, \textit{minUtil}, \textit{ptable}.} 
		\ENSURE {The set of \textit{PHUIs}.}
		
		\STATE scan uncertain database $D$ to calculate the \textit{RTWU}$(i)$ and $Pro(i)$ of each item $i \in I$;
		\STATE $I^* \leftarrow$ each item $i$ such that $Pro(i) \geq minPro \times |D| \wedge RTWU(i) \geq minUtil $;
		\STATE sort the items in the set of $I^*$ as $\succ$ order; 
		\STATE scan database $D$ to build the PU$^{\pm}$-list of each item $i \in I^*$ and construct $EUCS$;
		\STATE call \textbf{Search($\emptyset$, $I^*$, \textit{minPro}, \textit{minUtil}, \textit{EUCS})}.
		
		\STATE \textbf{return} \textit{PHUIs}	
	\end{algorithmic}
\end{algorithm}

\begin{algorithm}
	\caption{Search procedure}
	\label{AlgorithmSEARCH}
	\begin{algorithmic}[1]
		\REQUIRE {$P$, \textit{ExtensionsOfP}, \textit{minPro}, \textit{minUtil}, \textit{EUCS}.} 
		\ENSURE {The set of \textit{PHUIs}.}
		
		\FOR{each pattern $Py \in $ \textit{ExtensionsOfP}}
		\IF{\textit{SUM(Py.pro)} $ \geq minPro \times |D|$ $ \wedge $ \textit{(SUM(Py.pu)} + \textit{SUM(Py.nu)}) $ \geq minUtil$}
		\STATE output $Py$ as a \textit{PHUI};
		\ENDIF
		\IF{\textit{SUM(Py.pro)} $ \geq minPro \times |D|$ $  \wedge $ (\textit{SUM(Py.pu)} + \textit{SUM(Py.rpu)}) $ \geq minUtil $}
		\STATE \textit{ExtensionsOfPy} $ \leftarrow \emptyset$;
		\FOR{$Pz \in $ \textit{ExtensionsOfP} such that $z \succ y$}
		\IF{$RTWU(\{y,z\}) \geq minUtil$}
		\label{linepruning}  
		\STATE $ Pyz \leftarrow Py \cup Pz $;
		\STATE \textit{Pyz.PUL} $ \leftarrow $  \textbf{Construct(\textit{P}, \textit{Py}, \textit{Pz})};
		\IF{\textit{Pyz.PU}L $ \not= \emptyset $ and \textit{SUM(Pyz.pro)} $ \geq minPro $ $ \times |D| $}
		\STATE \textit{ExtensionsOfPy} $ \leftarrow $ \textit{ExtensionsOfPy} $\cup Pyz $;
		\ENDIF
		\ENDIF	
		
		\ENDFOR
		\STATE \textbf{call Search(\textit{Pyz}, \textit{ExtensionsOfPy}, \textit{minPro}, \textit{minUtil}, \textit{EUCS})}.
		\ENDIF
		\ENDFOR
		\STATE \textbf{return} \textit{PHUIs}	
	\end{algorithmic}
\end{algorithm}

The search procedure of the HUPNU is described in Algorithm \ref{AlgorithmSEARCH}. For each extension $Py$ of $P$, if the probability of $Py$ is no less than $ minPro \times |D| $, and the summed up actual utility of $Py$ in the PU$^{\pm}$-list (denoted as \textit{SUM}$(Y.pu)$ + \textit{SUM}$(Y.nu)$) is no less than $minUtil$, then $Py$ can be considered to be a PHUI (Lines 2 to 4). The developed pruning Strategies 3 and 4 are then performed to check whether the extension ($Py$) can be explored (Line 5). This progress can be executed by integrating $Py$ with all extensions $Pz$ of $P$, such that $z \succ y$ and $RTWU(\{y,z\}) \geq minUtil$ (Line 8, pruning Strategy 6), to generate the extensions ($Pyz$) containing $|Py|+1$ items. After that, the PU$^{\pm}$-list of $Pyz$ can be built by the \textit{Construct} procedure to join the PU$^{\pm}$-lists of $P$, $Py$, and $Pz$ (Lines 9 to 13). Note that the promising PU$^{\pm}$-lists can only be later explored (Line 12, pruning Strategy 5). A recursive $Search$ procedure of $Pyz$ is then performed to obtain its utility and explore the extension(s) (Line 16).

\section{Experimental Study} \label{sec:experiments}

In this section, we evaluate the performance of the developed HUPNU algorithm. All the algorithms were implemented with Java language, and executed on an Intel Core-i5 processor running on Microsoft Windows 7 64 bits operation system with 4GB of main memory. Memory usage was measured by Java API. To the best of our knowledge, this is the first paper that considers both the positive and negative unit utilities of items in an uncertain database; none of the previous works handle this topic. Thus, the developed HUPNU along with several designed pruning strategies were compared in the experiments. HUPNU$ _{P12}$ denotes that pruning Strategies 1 and 2 were involved in the HUPNU; HUPNU$ _{P123}$ considers pruning Strategies 1, 2, and 3; HUPNU$ _{P1234}$ takes the pruning Strategies 1, 2, 3, and 4; and HUPNU$ _{All}$ is concerned with all the pruning Strategies (1 to 5) for evaluation. Experiments were carried out on five realistic datasets\footnote{\url{http://fimi.ua.ac.be/data/}} (i.e., kosarak, accidents, psumb, retail, and mushroom) and one synthetic dataset \cite{IBMdata}. The individual characteristics of each of the six datasets are given below.

\begin{itemize}
	\item \textbf{kosarak:} it has 990,002 transactions, and the number of distinct items is 41,270. The average length and the maximum length of transactions are is 8.09 and 2,498, respectively.
	\item \textbf{accidents:} it has 340,183 transactions and 468 distinct items. For the all transactions, the average length is 33.8 and the maximum length is 51.
	\item \textbf{psumb:} it has total 49,046 transactions and 2,113 distinct items. It is a very dense dataset since the average length of each transaction is 74.
 	\item \textbf{retail:} this dataset contains 88,162 transactions and total 16,470 distinct items. The average length of transactions is 10.3, and the maximum length is 76.
 	\item \textbf{mushroom:} it has 8,124 transactions with 119 distinct items. It is a dense dataset since both the average length and the maximum length are 23. 
  	\item \textbf{T10I4D100K \cite{IBMdata}:} this dataset has 100,000 transactions with 870 items. These transactions have average 10.1 items, and the maximum length is 29. 	
\end{itemize}

The external utilities of different items for the six datasets were generated in the range of [-1,000, 1,000] using a log-normal distribution. In addition, the quantities of the items were randomly generated in the range of [1, 5]. These settings are similar to the previous well-known algorithms \cite{liu2012mining,fournier2014fhm,tseng2013efficient} for HUPM. Moreover, the unique probabilities of the items were randomly assigned in the range of (0.0, 1.0). In the experiments, we evaluated the implemented algorithms in terms of runtime, number of visited nodes (or patterns), and scalability. The results are given below.  

\subsection{Evaluation of Runtime} 
The runtime of the four implemented algorithms were then compared under the variants of  \textit{minUtil} and \textit{minPro} thresholds. Results in terms of the two thresholds are shown in Fig. \ref{fig:Runtime1} and Fig. \ref{fig:Runtime2}, respectively.

\begin{figure*}[!htbp]
	\setlength{\abovecaptionskip}{0pt}
	\setlength{\belowcaptionskip}{0pt}
	\centering
	\includegraphics[trim=90 0 80 12,clip,scale=0.58]{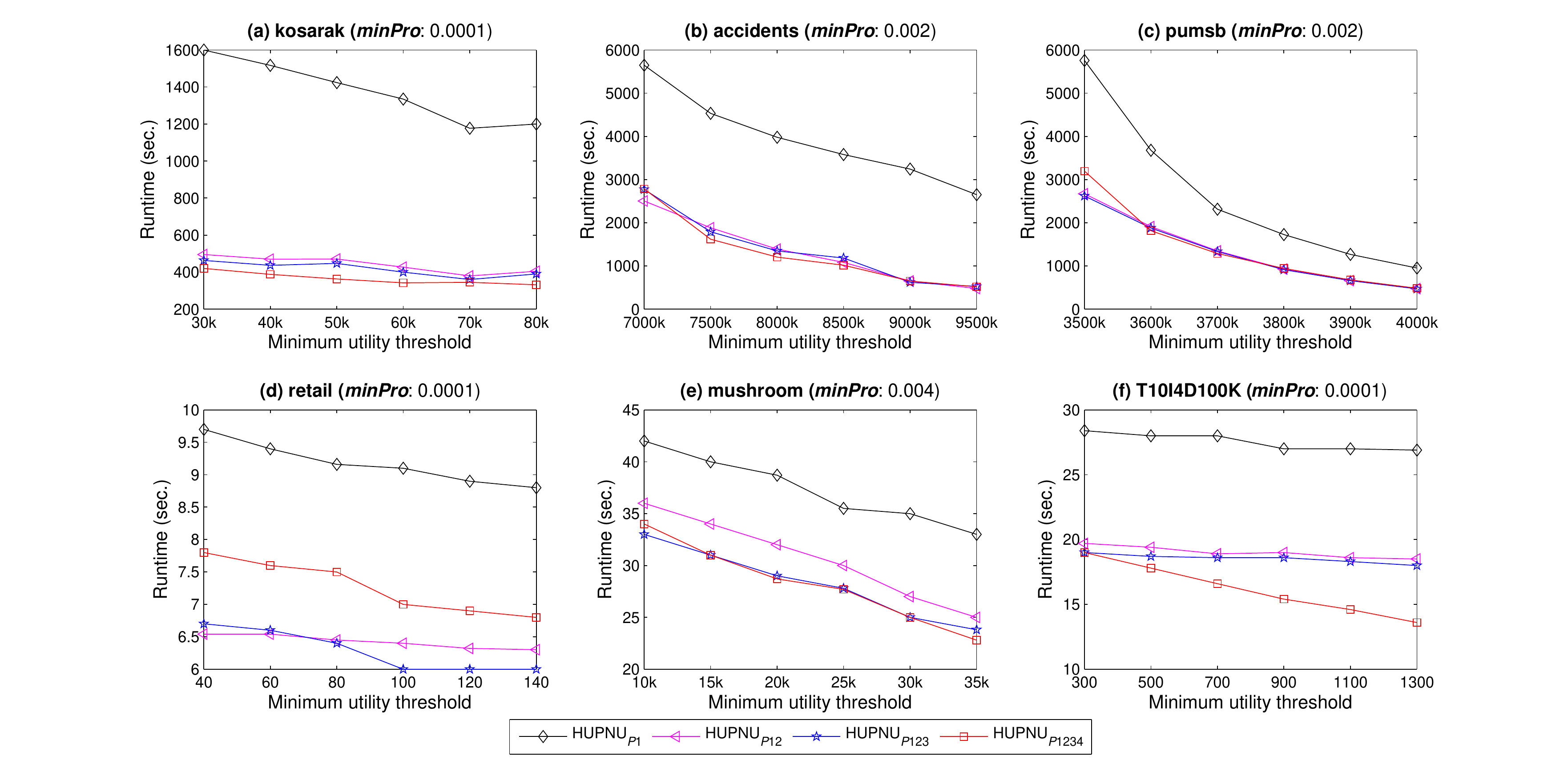}
	\caption{Runtime under varied \textit{minUtil} with a fixed \textit{minPro}.}
	\label{fig:Runtime1}	
\end{figure*}

\begin{figure*}[!htbp]
	\centering
	\setlength{\abovecaptionskip}{0pt}
	\setlength{\belowcaptionskip}{0pt}	
	\includegraphics[trim=90 0 80 12,clip,scale=0.58]{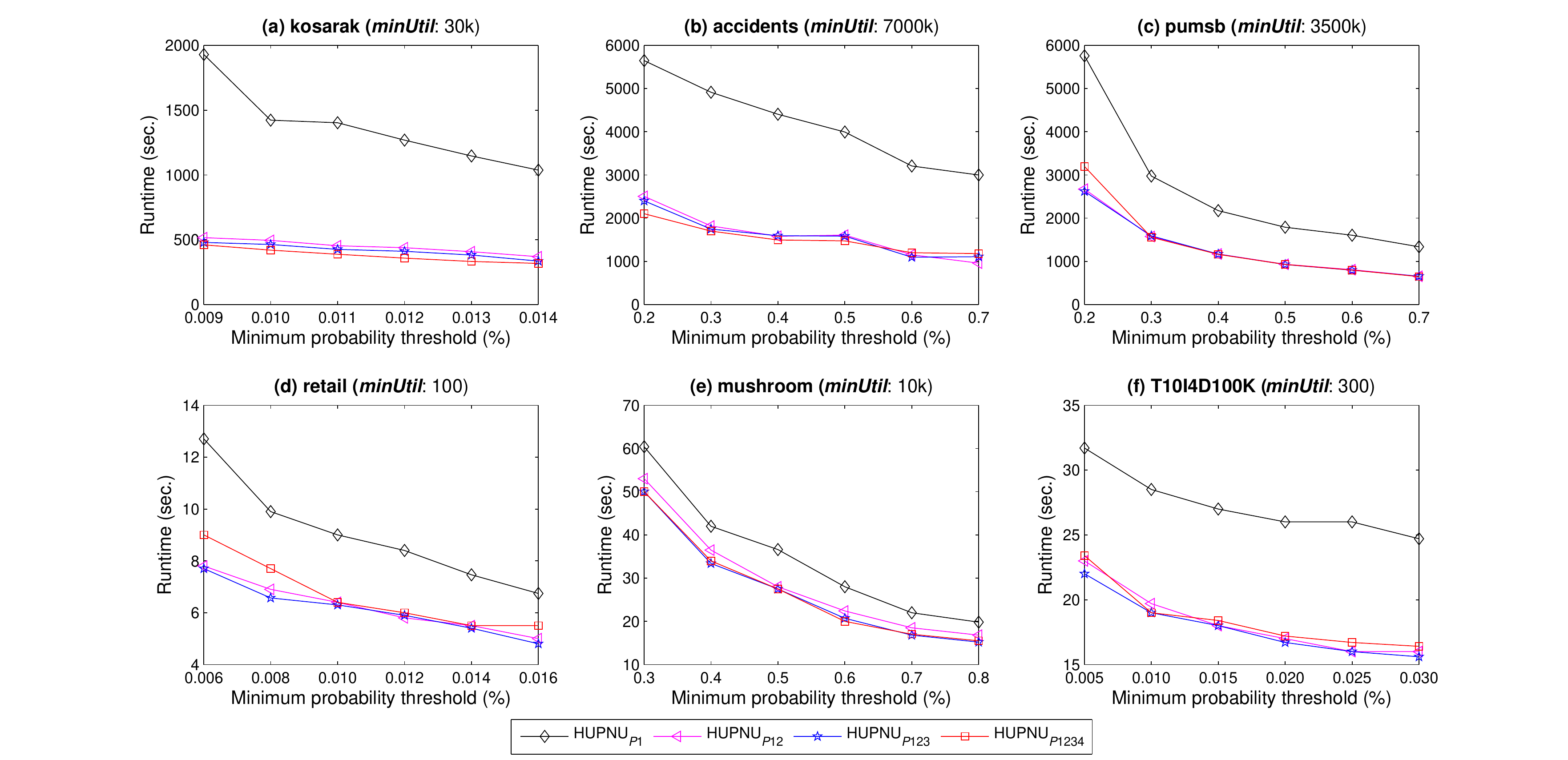}
	\caption{Runtime under varied \textit{minPro} with a fixed \textit{minUtil}.}
	\label{fig:Runtime2}	
\end{figure*}

As shown in Fig. \ref{fig:Runtime1} and Fig. \ref{fig:Runtime2}, the runtimes of all implemented algorithms decreased along with the increasing of the $minUtil$ or $minPro$ threshold. Specifically, the implemented algorithms with variants of pruning strategies greatly improved the performance, up to nearly one or two orders of magnitude faster than the baseline approach. For example, HUPNU$ _{P1234}$, which adopts all the efficient pruning strategies, outperformed the other variants of the designed approach. The reason is that the HUPNU$ _{P1234}$ algorithm is concerned with all the pruning strategies, and the unpromising candidates can be greatly reduced. Therefore, the traversal procedure to explore the unpromising patterns in the enumeration tree can be avoided, as well as the costly join operation to generate the huge unpromising candidates of the PU$^{\pm}$-lists. When the \textit{minUtil} and \textit{minPro} values were set lower, more and longer patterns were mined, and there was a greater computational cost in terms of the runtime that was required. This situation can be easily observed in a very dense dataset, for example in accidents and psumb.

The developed PU$^{\pm}$-list can also easily help the variants of the algorithms to directly mine the required patterns without candidate generation. The list structure can effectively reduce the multiple database scans. We can also observe that the designed pruning strategies can greatly help with reducing the number of unpromising patterns. The required memory usage of the variants of algorithms is much less, but due to the page limit, we omit some details of the results. However, in the general pattern-mining approach, it can be easily concluded that more memory usage was required when the threshold was set lower. The designed PU$^{\pm}$-list could actually solve this limitation by using a more compressed search space, and achieve effectiveness and efficiency for mining the PHUIs.

\subsection{Evaluation of Number of Patterns (Visited Nodes)}
In this section, the numbers of visited nodes (also known as the candidate patterns) in the designed PU$^{\pm}$-tree are compared to evaluate the effects  of pruning strategies. Note that the number of visited nodes of the four variants of the HUPNU algorithm is denoted as $N_{1}$, $N_{2}$, $N_{3}$, and $N_{4}$, respectively. According to the same parameter settings in Fig. \ref{fig:Runtime1} and Fig. \ref{fig:Runtime2}, the final results of the patterns are respectively shown in Table \ref{table:patterns1} and Table \ref{table:patterns2} in terms of \textit{minUtil} and \textit{minPro} thresholds. 

From Table \ref{table:patterns1} and Table \ref{table:patterns2}, it can be seen that $N_{1} > N_{2} > N_{3} > N_{4} >$ \textit{PHUIs} in terms of varied \textit{minUtil} and \textit{minPro}. We can then draw a conclusion such that: (1) The number of designed PHUIs discovered was fewer in the uncertain dataset compared to the number of candidate patterns. (2) The set of the discovered PHUIs was more meaningful; the designed HUPNU discover more concrete and useful patterns with both positive and negative unit utilities of item constraints than the traditional algorithms in HUPM. (3) The search space of the HUPNU was very large without any pruning strategies, but the developed strategies could reduce the its size. (4) The results were reasonable, and even the size of the search space could be reduced by different pruning strategies, the completeness, and the correctness of the final PHUIs could still be held. 

For the designed pruning strategies 2, 3, and 4 of the implemented HUPNU$ _{P1}$ algorithm, they hold the upper-bound values for the utility and probability of the patterns, and thus combinational exploration could be avoided. However, some unpromising patterns could not be effectively filtered, which can be obviously seen from the gap between the number of final PHUIs and $N_{1}$. It can also be seen that PU-Prune had a great performance in removing unpromising patterns, which could be observed in $N_{1}$ and $ N_{2} $. This strategy avoids the construction progress for the numerous unpromising patterns. We can also see that the EUCP strategy made a great effort to reduce the search space. The relationships of $N_{1} > N_{2} > N_{3} > N_{4}$ were correctly held. When the threshold value was set lower, for example, $minUtil$ or $minPro$, the gap between different implemented algorithms of the visited patterns could become huge, and the effectiveness of the designed pruning strategies is held.

\begin{table*}[!htbp]
	\setlength{\abovecaptionskip}{0pt}
	\setlength{\belowcaptionskip}{0pt}
	\scriptsize
	\fontsize{7.5pt}{9pt}\selectfont
	\centering
	\caption{Derived patterns under varied \textit{minUtil} and fixed \textit{minPro}}
	\label{table:patterns1}
	\begin{tabular}{c|c|llllll}
		\hline\hline
		\multirow{2}*{\textbf{Dataset}}&
		\multirow{2}*{\textbf{Pattern}}
		&\multicolumn{6}{c}{\textbf{Threshold of \textit{minUtil}}}\\
		\cline{3-8}
		&&$ minUtil_1 $ & $ minUtil_2 $ & $ minUtil_3 $ & $ minUtil_4 $ &  $ minUtil_5 $ &  $ minUtil_6 $ \\ \hline
		&$N_1$ &109,475,579	&79,746,008	&61,798,333	&50,499,186	&41,197,757	&33,144,921 \\
		& $N_2$ &14,399,980	&12,180,884	&10,571,745	&9,254,740	&8,123,221	&7,125,357 \\
		\textbf{(a) kosarak} & $N_3$ &12,322,701	&10,611,814	&9,369,910	&8,286,105	&7,345,445	&6,505,338  \\
		& $N_4$ &8,067,669	&6,811,140	&5,945,495	&5,217,922	&4,603,638	&4,081,767  \\
		& \textbf{PHUIs} & \textbf{45,613}	& \textbf{35,662}	& \textbf{28,077}	& \textbf{24,317}	& \textbf{21,420}	& \textbf{18,843}  \\		    
		\hline
		
		&$N_1$ &194,512	&163,169	&138,399	&120,560	&107,620	&96,807 \\
		& $N_2$ &147,052	&119,802	&98,237	&83,198	&72,963	&64,390  \\
		\textbf{(b) accidents} & $N_3$ &146,426	&119,140	&97,507	&82,366	&72,087	&63,486  \\
		& $N_4$ &144,957	&117,671	&96,038	&80,897	&70,618	&62,017  \\
		& \textbf{PHUIs} & \textbf{6,331}	& \textbf{5,462}	& \textbf{4,603}	& \textbf{3,940}	& \textbf{3,493}	& \textbf{3,120}  \\		    
		\hline
		
		&$N_1$ &2,522,059	&1,814,071	&1,365,950	&1,184,769	&1,059,297	&849,484 \\
		& $N_2$ &1,265,985	&801,074	&590,923	&490,411	&434,835	&342,120 \\
		\textbf{(c) pumsb} & $N_3$ &1,238,118	&795,362	&583,778	&484,847	&423,414	&337,667  \\
		& $N_4$ &1,236,179	&793,423	&581,839	&482,908	&421,475	&335,732 \\
		& \textbf{PHUIs} & \textbf{8,648}	& \textbf{4,123}	& \textbf{3,114}	& \textbf{1,773}	& \textbf{1,388}	& \textbf{1,238}  \\		    
		\hline
		
		&$N_1$ &96,139,509	&48,350,235	&35,111,699	&27,307,914	&21,761,430	&17,948,378 \\
		& $N_2$ &21,876,139	&17,735,518	&15,582,757	&13,731,190	&11,813,940	&10,462,617 \\
		\textbf{(d) retail} & $N_3$ &21,078,117	&17,429,475	&15,360,149	&13,568,770	&11,696,475	&10,374,338  \\
		& $N_4$ &20,968,980	&17,332,324	&15,270,502	&13,484,318	&11,615,439	&10,296,705   \\
		& \textbf{PHUIs} & \textbf{19,278}	& \textbf{13,584}	& \textbf{10,655}	& \textbf{8,747}	& \textbf{7,423}	& \textbf{6,378}  \\		    
		\hline	
		
		&$N_1$ &1,558,452	&974,145	&721,233	&507,260	&380,254	&322,867 \\
		& $N_2$ &828,898	&527,865	&394,426	&278,595	&215,407	&181,517 \\
		\textbf{(e) mushroom} & $N_3$ &700,121	&440,243	&327,442	&229,328	&175,038	&146,840 \\
		& $N_4$ &698,969	&439,522	&326,867	&228,871	&174,672	&146,516   \\
		& \textbf{PHUIs} & \textbf{147,088}	& \textbf{85,133}	& \textbf{62,773}	& \textbf{42,466}	& \textbf{30,510}	& \textbf{25,590} \\		    
		\hline
		
		&$N_1$ &14,059,513	&7,194,329	&4,743,728	&3,475,577	&2,739,103	&2,257,562 \\
		& $N_2$ &6,199,289	&2,365,166	&1,264,020	&820,482	&607,506	&491,439 \\
		\textbf{(f) T10I4D100K} & $N_3$ &4,619,226	&1,655,006	&872,646	&584,661	&453,250	&383,900  \\
		& $N_4$ &4,454,919	&1,570,317	&802,457	&519,180	&390,542	&323,262   \\
		& \textbf{PHUIs} & \textbf{37,242}	& \textbf{20,710}	& \textbf{13,916}	& \textbf{10,140}	& \textbf{7,854}	& \textbf{6,292}  \\		    
		\hline\hline
	\end{tabular}
\end{table*}

\begin{table*}[!htbp]
	\setlength{\abovecaptionskip}{0pt}
	\setlength{\belowcaptionskip}{0pt}
	\scriptsize
	\fontsize{7.5pt}{9pt}\selectfont
	\centering
	\caption{Derived patterns under varied \textit{minPro} and fixed \textit{minUtil}}
	\label{table:patterns2}
	\begin{tabular}{c|c|llllll}
		\hline\hline
		\multirow{2}*{\textbf{Dataset}}&
		\multirow{2}*{\textbf{Pattern}}
		&\multicolumn{6}{c}{\textbf{Threshold of \textit{minPro}}}\\
		\cline{3-8}
		&&$ minPro_1 $ & $ minPro_2 $ & $ minPro_3 $ & $ minPro_4 $ &  $ minPro_5 $ &  $ minPro_6 $ \\ \hline
		
		&$N_1$ &79,746,008	&73,064,534	&66,950,091	&61,745,111	&57,114,432	&53,014,125 \\
		& $N_2$ &12,180,884	&11,778,729	&11,445,291	&11,154,865	&10,808,309	&10,460,194 \\
		\textbf{(a) kosarak} & $N_3$ &10,611,814	&10,375,295	&10,205,956	&10,077,832	&9,882,828	&9,664,443  \\
		& $N_4$ &6,811,140	&6,028,964	&5,459,093	&5,019,367	&4,612,615	&4,243,295  \\
		& \textbf{PHUIs} & \textbf{35,662}	& \textbf{24,589}	& \textbf{17,304}	& \textbf{12,595}	& \textbf{9,362}	& \textbf{7,154}  \\		    
		\hline
		
		&$N_1$ &194,512	&150,179	&112,717	&84,525	&64,389	&48,824 \\
		& $N_2$ &147,052	&109,522	&81,229	&60,283	&44,655	&32,550  \\
		\textbf{(b) accidents} & $N_3$ &146,426	&109,127	&80,933	&60,098	&44,554	&32,490  \\
		& $N_4$ &144,957	&107,740	&79,657	&59,029	&43,610	&31,651  \\
		& \textbf{PHUIs} & \textbf{6,331}	& \textbf{4,552}	& \textbf{3,321}	& \textbf{2,411}	& \textbf{1,735}	& \textbf{1,237}  \\		    
		\hline
		
		&$N_1$ &2,522,059	&1,892,203	&1,433,997	&1,054,904	&780,328	&600,511 \\
		& $N_2$ &1,265,985	&904,469	&654,212	&473,337	&344,866	&256,833 \\
		\textbf{(c) pumsb} & $N_3$ &1,238,118	&885,941	&642,025	&466,156	&340,510	&254,184  \\
		& $N_4$ &1,236,179	&883,884	&640,104	&464,472	&338,917	&252,599  \\
		& \textbf{PHUIs} & \textbf{8,648}	& \textbf{5,582}	& \textbf{3,630}	& \textbf{2,277}	& \textbf{1,396}	& \textbf{817}  \\		    
		\hline
		
		&$N_1$ &37,793,976	&37,017,817	&36,106,601	&35,111,699	&34,139,328	&33,094,290 \\
		& $N_2$ &16,931,811	&16,454,728	&15,986,124	&15,582,757	&15,288,602	&15,031,344 \\
		\textbf{(d) retail} & $N_3$ &16,526,029	&16,131,265	&15,715,903	&15,360,149	&15,101,236	&14,876,214  \\
		& $N_4$ &16,507,082	&16,090,920	&15,652,668	&15,270,502	&14,984,007	&14,728,697   \\
		& \textbf{PHUIs} & \textbf{14,406}	& \textbf{13,048}	& \textbf{1,1802}	& \textbf{10,655}	& \textbf{9,660}	& \textbf{8,724}  \\		    
		\hline	
		
		&$N_1$ &974,145	&923,770	&860,295	&796,159	&727,630	&665,124 \\
		& $N_2$ &527,865	&485,700	&444,208	&404,089	&363,959	&328,273 \\
		\textbf{(e) mushroom} & $N_3$ &440,243	&406,109	&373,114	&341,243	&309,047	&281,344 \\
		& $N_4$ &439,522	&405,048	&372,035	&339,920	&307,708	&279,939   \\
		& \textbf{PHUIs} & \textbf{85,133}	& \textbf{75,769}	& \textbf{67,538}	& \textbf{60,065}	& \textbf{52,944}	& \textbf{47,398}  \\		    
		\hline
		
		&$N_1$ &7,194,329	&6,992,960	&6,804,076	&6,632,844	&6,457,696	&6,259,347 \\
		& $N_2$ &2,365,166	&2,195,551	&2,034,389	&1,879,105	&1,718,748	&1,549,820 \\
		\textbf{(f) T10I4D100K} & $N_3$ &1,655,006	&1,519,652	&1,398,078	&1,296,955	&1,211,962	&1,133,041  \\
		& $N_4$ &1,570,317	&1,384,445	&1,225,810	&1,097,950	&997,283	&905,622   \\
		& \textbf{PHUIs} & \textbf{20,710}	& \textbf{19,278}	& \textbf{17,869}	& \textbf{16,454}	& \textbf{15,132}	& \textbf{13,900}  \\		    
		\hline\hline
	\end{tabular}
\end{table*}

\subsection{Evaluation of Scalability}
From Fig. \ref{figOfScalability}, the scalability was carried out on a realistic BMS-POS dataset under varied dataset sizes. The threshold values were set as \textit{minPro} = 0.0001 and \textit{minUtil} = 10k, and the dataset size was varied from 100k to 500k. In Fig. \ref{figOfScalability}(a), we can then find that the runtimes of the four implemented algorithms linearly increased along with the increasing size of the dataset. We can see that the runtime of HUPNU$ _{P123}$ was close to that of HUPNU$ _{P12}$, and both of them were significantly faster than HUPNU$ _{P1}$. We also can see that HUPNU$ _{P1234}$ outperformed the other implemented algorithms. When the size of the dataset increases, the gap between the implemented algorithms becomes larger but they still remain stable with linear growth. The memory usage of  four implemented algorithms is shown in Fig. \ref{figOfScalability}(b). HUPNU$ _{P1}$ requires the most memory usage; HUPNU$ _{P123}$ and HUPNU$ _{P1234}$ has similar results, and requires the least memory usage. Fig. \ref{figOfScalability}(c) shows the number of visited patterns of the four implemented algorithms and the final PHUIs. The results show the effectiveness and efficiency of the designed pruning strategies, and when the size of the dataset becomes larger, the gap of those algorithms becomes huge.

\begin{figure*}[!htbp]
	\setlength{\abovecaptionskip}{0pt}
	\setlength{\belowcaptionskip}{0pt}
	\centering
	\includegraphics[trim=120 0 100 0,clip,scale=0.48]{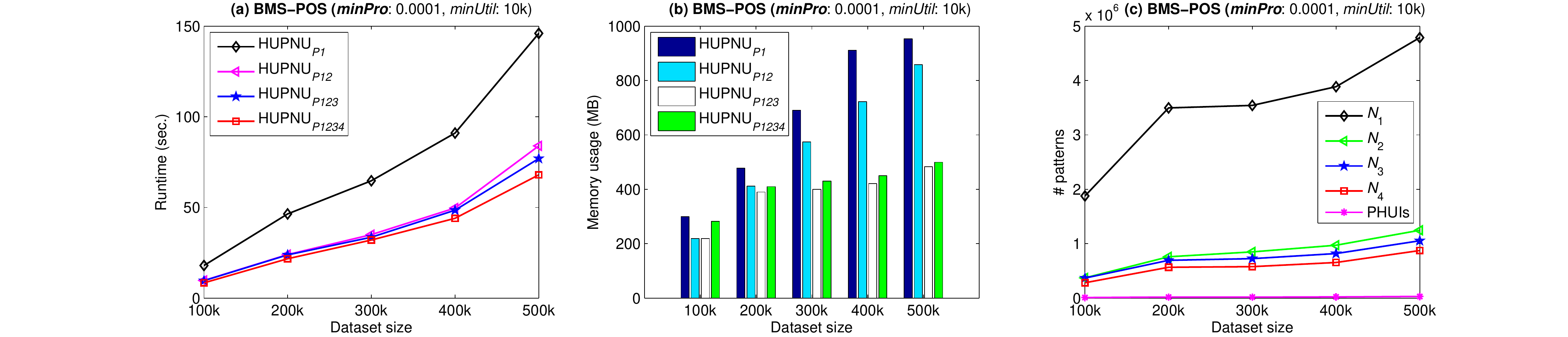}
	\caption{Scalability evaluation.}
	\label{figOfScalability}
	
\end{figure*}

\section{Conclusion} \label{sec:conclusion}

In this paper, we present a HUPNU algorithm by jointly considering the uncertainty and utility (both positive and negative) factors, to reveal the qualified high-utility patterns. This is the first work concerning these realistic factors in some real-life situations, such as Internet of Things data and manufacturing data. A vertical structure named PU$^{\pm}$-list was designed to keep necessary information, such as probability, and the positive and negative utilities of the items for later mining progress. Based on the above properties, the HUPNU algorithm could directly produce the qualified high-utility patterns in one phase. Moreover, several efficient pruning strategies were also developed to greatly reduce the search space for mining the promising patterns, and thus the computation could be sped up efficiently. Extensive experiments were carried out on several synthetic/realistic datasets to show the efficiency and effectiveness of the designed algorithm in terms of runtime, number of discovered qualified patterns, and scalability.

\ifCLASSOPTIONcaptionsoff
  \newpage
\fi



\bibliographystyle{IEEEtran}

\bibliography{IEEEabrv,paper}
%

%




\end{document}